\def \T {\mathsf{T}}
\def \E {\mathbf{E}}
\def \< {\langle}
\def \> {\rangle}
\def \vw {\bm{w}}
\newtheorem{remark}{Remark}
\newtheorem{proposition}{Proposition}
\newtheorem{theorem}{Theorem}
\newtheorem{corollary}{Corollary}
\newtheorem{assumption}{Assumption}
\DeclareMathOperator{\Equaldef}{\overset{def}{=}}
\title{\LARGE \bf
Neuroscheduling for Remote Estimation}
\author{Marcos M. Vasconcelos and Yifei Zhang  
\thanks{M. M. Vasconcelos and Y. Zhang are with the Department of Electrical and Computer Engineering, FAMU-FSU College of Engineering, Florida State University,  Tallahassee, FL 32306, USA. E-mails:
        {\tt   m.vasconcelos@fsu.edu, yz23r@fsu.edu@fsu.edu}.}%
}
\begin{document}

\maketitle
\thispagestyle{empty}
\pagestyle{empty}

\begin{abstract}%
Many modern distributed systems consist of devices that generate more data than what can be transmitted via a communication link in near real time with high-fidelity. We consider the \textit{scheduling problem} in which a device has access to multiple data sources, but at any moment, only one of them is revealed in real-time to a remote receiver. Even when the sources are Gaussian, and the fidelity criterion is the mean squared error, the globally optimal data selection strategy is not known. We propose a data-driven methodology to search for the elusive optimal solution using linear function approximation approach called \textit{neuroscheduling} and establish necessary and sufficient conditions for the optimal scheduler to not over fit training data. 
Additionally, we present several numerical results that show that the globally optimal scheduler and estimator pair to the Gaussian case are nonlinear.
\end{abstract}

\vspace{10pt}


\section{Introduction}

A fundamental question in networked systems is: \textit{if an agent has more data than what can be transmitted over a link at any given time, how does it choose what to transmit and what to discard?} The transmitter's choice depends on what the receiver wants and what it is capable of doing with the data transmitted. We approach this problem by first formulating it in its simplest form, shown in \cref{fig:scheduling}. One transmitting agent has access to two variables $X_1$ and $X_2$, and must choose one of them to send to a receiving agent according to a function, $\gamma$. The receiving agent will only have access to either $X_1$ or $X_2$, but it wants to recover both. To that effect, the receiver uses a function, $\eta$. The goal of the designer is to optimize an appropriate performance metric (mean squared error, probability of error, etc.) over all possible functions $\gamma$ and $\eta$.

To the best of our knowledge, the optimal solution to this problem is unknown even in the canonical case of Gaussian sources under the mean-squared error distortion. In this paper, we propose searching for the optimal scheduler using data-driven linear function approximation approach called \textit{neuroscheduling}. Our main contributions are: 1. we provide a condition that needs to be satisfied by the basis functions in our approximation to avoid over-fitting; 2. we show via numerical results that linear estimators in the Gaussian case are not optimal.

\subsection{Related Literature}
Data selection is a topic with a rich history rooted in statistics, where the goal is choose a subset of the original feature set without making any transformation on the attributes \cite{Villa:2023}. The conventional approach to this is to use informativeness metrics reminiscent of Estimation theory, such as functions of the error covariance matrix and select a subset of  predetermined size $K$ from a larger set with $N$ features \cite{Hashemi:2020}. This optimization problem is known to be computationally intractable and the alternative is to use suboptimal algorithms that have an acceptable performance and can produce a solution in reasonable time \cite{Joshi:2008,Moon:2017,Takahashi:2023}. However, this approach is done \textit{before the data is collected} based on the probabilistic model of the data and the observation noise. Moreover, the strategies obtained using such approaches are static, e.g. ``\textit{if sensor A is more informative than sensor B, then sensor A is prioritized}.''

Information theory provides alternative ways to measure informativeness, that can also be used to perform data selection \cite{Wang:2022}. However, information theoretic quantities, such as entropy and mutual information, do not depend on the data content at all and were designed to capture features related to the probabilistic model. Recent developments have established a connection between information and moments of the underlying random variables \cite{Alghamdi:2024}, but their use to select informative data on the basis of their realizations remains an open question. Existing approaches require a full knowledge of the probabilistic model of the system and provide policies that do not capture situations where a less informative data source may produce a very informative data sample.

\begin{figure}[t!]
    \centering
\includegraphics[width=0.4\textwidth]{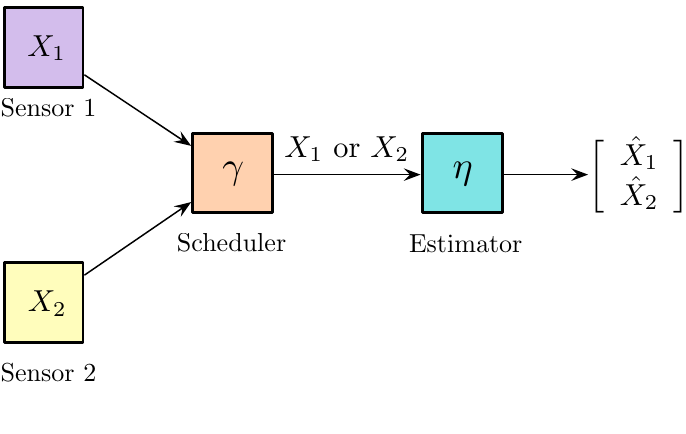}
    \caption{Observation-driven sensor scheduling system with two sensors, one scheduler and one receiver.}
    \label{fig:scheduling}
\end{figure}

Control theory, on the other hand, is based on the notion that data is measured first and then processed using policies -- therefore, the results are naturally adapted to data. In the control theory literature, data selection appear under the moniker of \textit{event-triggered} estimation/control/learning \cite{Molin:2017,Tabuada:2007,Solowjow:2020,Yun:2023,Salimnejad:2024}. The premise is that in an event-triggered policy, the data is transmitted to a remote decision-maker if the observed variable exceeds a certain threshold. The idea is that observations that are small (in an appropriate sense) are predictable and redundant -- they are not informative. A large body of literature based on this principle exists. In particular, the area of remote estimation has emerged as an area of interest, where smart sensors independently make decisions to transmit (or not) their observations to a fusion center \cite{Lipsa:2011,Nayyar:2013,Vasconcelos:2017a,Vasconcelos:2019b,Ni:2021,Leong:2023}. The literature reveals that in existing applications, each sensor observes a single variable/vector that is either fully kept or discarded. However, to the best of our knowledge, a similar systematic event-based approach to subset selection where multiple variables are simultaneously observed at the same location and a subset of them are selected has not been developed until now.

Another body of literature closely related to our work pertains to the Value of Information (VoI), a concept originally introduced in the field of economics. Historically, VoI has been considered in making decisions in optimal control settings \cite{Chiariotti:2022,Soleymani:2022,Soleymani:2023,Zancanaro:2023}, and it is similar to what we propose here: transmit the data (or not) based on the level of performance improvement that can be achieved using that additional data. In certain cases, scheduling based on VoI may coincide with the scheduling of Extremum Information \cite{Vasconcelos:2020a}. It has been well-documented that the decision not to communicate an observation may convey implicit information \cite{Imer:2010,Lipsa:2011,Vasconcelos:2017a,Zhang:2023a,Santi:2024}. Such implicit communication leads to substantial gains in performance, which should also be taken into consideration when computing the VoI. 
However, the framework introduced in \cite{Soleymani:2022,Soleymani:2023} deals only with a single data source, whereas our focus in this paper  is on identifying a one of multiple sources that have the largest VoI for a particular task.

\section{Problem setup}

Consider two Gaussian random variables observed by a scheduler, which is interested in sending one of them to a receiver jointly distributed according to $(X_1,X_2) \sim f_{X_1X_2}$, and assume that all sources have stationary statistics. At a given time, the scheduler observes one realization of $X_1$ and $X_2$ decides using a scheduling policy, whether $X_1$ or $X_2$ will be transmitted to the destination. Communication happens in real time, i.e., we do not make our decisions based on observing blocks of data, which would necessarily incur communication delay \cite{Akyol:2014}. A \textit{scheduling policy} is a map from the observation space $\mathbb{R}^2$ to the set $\{1,2\}.$ The decision variable $U\in\{1,2\}$ is computed according to

\begin{equation}
U = \gamma(X_1,X_2).
\end{equation}

Based on $U$, the information sent over the channel is determined as follows:

\begin{equation}
Y = \begin{cases}
(1,X_1), & \textup{if} \ \ U =1\\
(2,X_2), & \textup{if} \ \ U =2,
\end{cases}
\end{equation}

where the index in front of the information variable is important since the receiver needs to know which information source generated the real number observed in the packet, \textit{i.e.}, it indicates the origin of the communication packet.

At the destination, the receiver implements an \textit{estimation policy}, which attempts to reconstruct both sources based on the observation received over the link, $Y$. We define $\eta$ as such estimation policy, and the estimates are computed as follows:
\begin{equation}
    \begin{bmatrix}
\hat{X}_1 \\
\hat{X}_2
    \end{bmatrix} = \eta(Y).
\end{equation}

From the designer's perspective, the optimization problem that we are interested in solving is the following:
\begin{equation}\label{OC}
\min_{(\gamma,\eta)\in \Gamma\times \mathrm{H}} \mathbf{E} \left[ (X_1-\hat{X}_1)^2 + (X_2-\hat{X}_2)^2  \right],
\end{equation}
where $\Gamma$ and $H$ are the spaces of all \textit{admissible} scheduling and estimation policies, respectively.
By looking at Problem \eqref{OC}, we are searching for an optimal scheduling-estimation pair. A conceptualy simpler version of this problem is: given an estimator pair $(\eta_1,\eta_2)$, determine the scheduling policy that is optimal for this fixed estimator. The answer is the following scheduling policy:
\begin{equation}\label{max}
\gamma^\star_{\eta_1,\eta_2}(x_1,x_2) = \begin{cases}
1, & \text{if} \ \ |x_1-\eta_1(x_2)| \geq |x_2-\eta_2(x_1)|\\
2, & \text{otherwise,}
\end{cases}
\end{equation}
where $\eta_1$ and $\eta_2$ are functions used to estimate $X_1$ and $X_2$, respectively, based on the information revealed to the receiver by the scheduler. Assuming that this scheduling policy is used, then the next step is to find pair of optimal estimators $(\eta_1^*,\eta_2^*)$,
which are  nonlinear functions, in general. The scheduling policy in \cref{max} is called \textit{max-scheduling} \cite{Vasconcelos:2020}.  In many cases, it is possible to show that certain estimation policies are person-by-person optimal \cite{Yuksel:2013} with this policy, leading to locally optimal solutions. However, such results rely on a \textit{guess-and-verify} approach, and there is no systematic way of obtaining locally nor globally optimal solutions.


\subsection{Optimality of Estimator Pairs}

Generally, a scheduler is a measurable function on $\mathbb{R}^2$ that induces a partition of $\mathbb{R}^2$. Let $\gamma : \mathbb{R}^2 \rightarrow \{1,2\}$, where $\{(x_1,x_2)\in\mathbb{R}^2 \mid \gamma(x_1,x_2)=1\} \cup \{(x_1,x_2)\in\mathbb{R}^2 \mid \gamma(x_1,x_2)=2\} = \mathbb{R}^2$ and $\{(x_1,x_2)\in\mathbb{R}^2 \mid \gamma(x_1,x_2)=1\} \cap \{(x_1,x_2)\in\mathbb{R}^2 \mid \gamma(x_1,x_2)=2\} = \varnothing$. We are interested in obtaining the optimal estimator pair when the scheduler follows the max-scheduling policy of \cref{max}. For this purpose, let us define the error vector on $\mathbb{R}^2$ for a particular estimator pair $(\eta_1,\eta_2)$: 
\begin{equation}\label{EV}
    \epsilon_{\eta_1,\eta_2} \Equaldef \begin{cases}
        X_2-\eta_2(X_1), & U = 1 \\
        X_1-\eta_1(X_2), & U = 2.
    \end{cases}
\end{equation}

For a given pair $\eta_1,\eta_2$, the quantity $\epsilon_{\eta_1,\eta_2}$ is a random variable on $\mathbb{R}^2$. Whenever we change the estimator function pair $(\eta_1,\eta_2)$, its distribution changes. Ideally, the following estimator pair should be the optimal one we are after:
\begin{equation}
\eta_1^\star(x_2) = \mathbf{E} \big[X_1 \mid X_2 = x_2, U=2 \big],
\end{equation}
and
\begin{equation}
\eta_2^\star(x_1) = \mathbf{E} \big[X_2 \mid X_1 = x_1, U=1 \big].
\end{equation}

It is important to notice that, in general, $\mathbf{E} \big[X_1 \mid X_2 = x_2, U=2 \big] \neq \mathbf{E} \big[X_1 \mid X_2 = x_2 \big]$ and $\mathbf{E} \big[X_2 \mid X_1 = x_1, U=1 \big] \neq \mathbf{E} \big[X_2 \mid X_1 = x_1 \big]$. Take the Gaussian case as an example, suppose $(X_1,X_2)$ follows a standard bivariate normal distribution with nonzero correlation coefficient $\rho$, then define
\begin{equation}
\hat\eta_1(x_2) = \mathbf{E} \big[X_1 \mid X_2 = x_2 \big] = \rho x_2 
\end{equation}
and
\begin{equation}
\hat\eta_2(x_1) = \mathbf{E} \big[X_2 \mid X_1 = x_1 \big] = \rho x_1.
\end{equation}
Computing the expectation of $\epsilon_{\hat\eta_1,\hat\eta_2}$,we get
\begin{multline}
    \mathbf{E}[\epsilon_{\hat\eta_1,\hat\eta_2}] = \frac{1}{2}\int_{\{(x_1,x_2)\in\mathbb{R}^2 \mid \gamma(x_1,x_2)=1\}} -\rho x_2 \hspace{5pt} \mathrm{d}\mu_X \\ + \frac{1}{2}\int_{\{(x_1,x_2)\in\mathbb{R}^2 \mid \gamma(x_1,x_2)=2\}} -\rho x_1  \hspace{5pt} \mathrm{d}\mu_X,
\end{multline}
where $\mu_X$ is the measure induced by $(X_1,X_2)$ on $\mathbb{R}^2$. Generally $\mathbf{E}[\epsilon_{\hat\eta_1,\hat\eta_2}]$ can never be 0 for a nonzero $\rho$. However, if we compute $\mathbf{E}[{\epsilon_{\eta_1^\star,\eta_2^\star}}]$, by law of total expectation, we have 
\begin{multline}
\mathbf{E}[\epsilon_{\eta_1^\star,\eta_2^\star}] = \frac{1}{2}\Big(\mathbf{E}[X_2]-\mathbf{E}\big[\mathbf{E} [X_2 \mid X_1 = x_1, U=1 ]\big]\Big) \\ + \frac{1}{2}\Big(\mathbf{E}[X_1]-\mathbf{E}\big[\mathbf{E} [X_1 \mid X_2 = x_2, U=2 ]\big]\Big) = 0,
\end{multline} 
which implies that even for Gaussian distributions, the optimal estimator pair $(\eta_1^\star,\eta_2^\star)$ corresponding to the max-scheduling policy can be nonlinear. The max-scheduling policy defines the random variable $U$ in the following sense:
\begin{equation}\label{maxscheduling}
    U = \begin{cases}
        1, &  \big(x_1-\eta_1(x_2)\big)^2 \geq \big(x_2-\eta_2(x_1)\big)^2 \\
        2, &  \big(x_1-\eta_1(x_2)\big)^2 < \big(x_2-\eta_2(x_1)\big)^2.
    \end{cases}
\end{equation}

For any estimator pair $(\eta_1,\eta_2)$, the scheduler always chooses to send out the variable that would give the larger estimation error if not transmitted. If we plug \cref{maxscheduling} into \cref{EV}, it is immediate to see that minimizing $\mathbf{E}[\epsilon_{\eta_1,\eta_2}]$ is equivalent to minimizing the expectation of the error which has a smaller magnitude among those two. Thus, we define the optimality of estimators corresponding to the max-scheduling policy in the following proposition.
\begin{proposition}
    Suppose the scheduler is desinated to apply the \textit{max-scheduling policy} as in \cref{maxscheduling}. Then the corresponding optimal estimators $\eta_1^\star$ and $\eta_2^\star$ are:
    \begin{multline}\label{cost}
        (\eta_1^\star,\eta_2^\star) \in \underset{(\eta_1,\eta_2)\in \mathrm{H}_1 \times \mathrm{H}_2}{\arg\min}  \mathcal{J}(\eta_1,\eta_2) \\ \Equaldef \mathbf{E} \left[ \min \bigg\{ \big(X_1-\eta_1({X}_2)\big)^2 , \big(X_2-\eta_2({X}_1)\big)^2\bigg\}  \right].
    \end{multline}
\end{proposition}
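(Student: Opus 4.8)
The plan is to show that once the scheduler is pinned to the max-scheduling rule, the two-term distortion appearing in the original objective \eqref{OC} collapses, realization by realization, to the pointwise minimum of the two candidate squared errors; minimizing over estimators then becomes \emph{verbatim} the problem \eqref{cost}. The whole argument is a reduction by cases on the value of $U$, so I would organize it around the events $\{U=1\}$ and $\{U=2\}$.

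First I would unpack what the receiver actually reconstructs. Because the transmitted source is delivered exactly, together with its index, on the event $\{U=1\}$ the receiver sets $\hat{X}_1 = X_1$ and $\hat{X}_2 = \eta_2(X_1)$, while on $\{U=2\}$ it sets $\hat{X}_2 = X_2$ and $\hat{X}_1 = \eta_1(X_2)$. Hence the transmitted component contributes zero to $(X_1-\hat{X}_1)^2 + (X_2-\hat{X}_2)^2$, and the instantaneous distortion equals $(X_2-\eta_2(X_1))^2$ on $\{U=1\}$ and $(X_1-\eta_1(X_2))^2$ on $\{U=2\}$ --- this is precisely the squared version of the error variable $\epsilon_{\eta_1,\eta_2}$ of \eqref{EV}. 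Next I would invoke the definition of max-scheduling in \eqref{maxscheduling}: the event $\{U=1\}$ is exactly $\{(X_1-\eta_1(X_2))^2 \geq (X_2-\eta_2(X_1))^2\}$, on which the realized distortion $(X_2-\eta_2(X_1))^2$ is the \emph{smaller} of the two candidate errors; symmetrically, on $\{U=2\}$ the realized distortion $(X_1-\eta_1(X_2))^2$ is again the smaller one. Therefore, for every realization $(x_1,x_2)$, the realized squared distortion equals $\min\{(x_1-\eta_1(x_2))^2,\,(x_2-\eta_2(x_1))^2\}$, with the tie case immaterial since the two quantities coincide there. Taking expectations turns the max-scheduling restriction of \eqref{OC} into $\mathcal{J}(\eta_1,\eta_2)$, and minimizing over $(\eta_1,\eta_2)\in \mathrm{H}_1\times\mathrm{H}_2$ yields \eqref{cost}.

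I do not expect a deep obstacle, since the heart of the statement is this realization-by-realization identity. The one point that genuinely needs care is the coupling between the scheduler and the estimators: the max-scheduling partition of $\mathbb{R}^2$ is itself a function of the pair $(\eta_1,\eta_2)$, so varying the estimators simultaneously moves the decision regions. The observation that makes the reduction clean is that the max-scheduler \emph{matched} to a given pair always selects the index achieving the pointwise minimum, so this coupling is absorbed automatically and the induced cost is the single functional $\mathcal{J}$ with no residual dependence on the partition geometry. A secondary, purely technical detail worth recording is that $\min\{\cdot,\cdot\}$ of two measurable maps is measurable, so $\mathcal{J}$ is well defined on $\mathrm{H}_1\times\mathrm{H}_2$ and the $\arg\min$ is a meaningful object.
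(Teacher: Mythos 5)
Your argument is correct and follows essentially the same route the paper takes: the paper's own justification (given informally in the paragraph preceding the proposition) is exactly this reduction --- substitute the max-scheduling rule \eqref{maxscheduling} into the error variable \eqref{EV} and observe that the realized distortion is pointwise the smaller of the two candidate squared errors, so the two-term objective of \eqref{OC} collapses to $\mathcal{J}$. Your write-up is in fact more careful than the paper's, in particular in making explicit that the transmitted component contributes zero distortion and that the estimator-dependence of the decision regions is absorbed by the matched scheduler.
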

In practice, given a set of independent and identically distributed data samples $\{(x_1^i,x_2^i)\}_{i=1}^{M}$ drawn from $f_{X_1X_2}$, we evaluate the performance of $(\eta_1,\eta_2)$ using the empirical mean:
\begin{equation}\label{EM}
    \bar{\mathcal{J}}(\eta_1,\eta_2) = \frac{1}{M} \sum_{i=1}^{M} \min \Big\{\big(x_1^i-\eta_1(x_2^i)\big)^2 , \big(x_2^i-\eta_2(x_1^i)\big)^2\Big\}.
\end{equation}

\section{Approximation of Nonlinear Estimators}

In general, Problem \eqref{cost} does not admit a solution in closed form \cite{Vasconcelos:2021}. There are no systematic ways of solving this problem in its current description. Instead, we need to use some form of approximation to obtain a candidate pair of optimal solutions $(\eta_1^\star,\eta_2^\star)$. Before we proceed with the search for optimal estimators, one natural question arises: \textit{-- What are the appropriate spaces $\mathrm{H}_1$ and $\mathrm{H}_2$ of admissible estimators?} 

We assume that the admissible estimators belong to the  sets of continuously differentiable functions. So ideally $\mathrm{H}_1 = \mathcal{C}^1[a,b]$ and $\mathrm{H}_2 = \mathcal{C}^1[c,d]$, where 
\begin{align}
& a \Equaldef \min \{x_2^1,\dots,x_2^{M} \}, \ b \Equaldef \max \{x_2^1,\dots,x_2^{M}\} \\ & c \Equaldef \min \{x_1^1,\dots,x_1^{M}\}, \  d \Equaldef \max \{x_1^1,\dots,x_1^{M}\}.
\end{align}
This means that we will ignore the performance of estimator pairs on $\mathbb{R}^2 \setminus [a.b] \times [c,d]$ since there are no samples on this set. For $\mathrm{H}_1 = \mathcal{C}^1[a,b]$ and $\mathrm{H}_2 = \mathcal{C}^1[c,d]$, Problem \eqref{cost} is still non-convex, and infinite-dimensional. To solve it in practice, we must turn it into a finite dimensional optimization problem first. In this section, we will discuss how to  approximately solve Problem \eqref{cost} by performing optimization over a vector of weight parameters.

\subsection{Principles of Linear Function Approximation}




We are going to use a linear combination of functions $\{\phi_n\}_{n=1}^{\infty}$ to approximate the optimal estimator pair $(\eta_1^\star,\eta_2^\star)$. It is known that, rigorously, the cardinality of the set of basis functions for $\mathcal{C}^1[a,b]$ is $\aleph_1$. Therefore, we can not define a countable basis for all differentiable functions. To make the approximation meaningful, we should assume that a unique linear combination of $\{\phi_n\}_{n=1}^{\infty}$ approaches $\eta_1^\star$ and $\eta_2^\star$ in a Schauder basis sense \cite{Albiac:2006}.
That is, there exist two unique sequences of weight parameters $\{w_1^n\}_{n=1}^{\infty}$ and $\{w_2^n\}_{n=1}^{\infty}$ such that 
\begin{equation}
  \eta_1^\star(x) =  \lim\limits_{K \to \infty}\sum_{n=1}^{K+1} w_1^n \phi_n(x),  \ \ x \in [a,b] \end{equation} 
and
  \begin{equation}
  \eta_2^\star(x) = \lim\limits_{K \to \infty} \sum_{n=1}^{K+1} w_2^n \phi_n(x), \ \  x\in[c,d]. 
  \end{equation}
  Here, the convergence is with respect to the sup norm equipped on the space of continuously differentiable functions $\mathcal{C}^1$. Formally,
consider a set of differentiable functions denoted by $\Phi$, such that:
\begin{equation}\label{sob}
\Phi(x) = \begin{bmatrix}
\phi_1(x) \\
\vdots \\
\phi_{K+1}(x) \\
\vdots
\end{bmatrix}.
\end{equation}
Let $\phi_1(x) = 1,$ $x\in\mathbb{R}$. A linear approximation for our estimator functions is given by:
\begin{equation}
\eta_j(x) = \vw_j^\T\Phi(x),
\end{equation}
where $\vw_j$ is a sequence of real numbers, $j \in \{1,2\}$. Then, the optimization is  performed over the weight sequences $\vw\Equaldef (\vw_1,\vw_2)$, as follows:
\begin{multline}\label{Approx}
    \min {\mathcal{J}}(\vw) = \mathbf{E} \left[ \min \bigg\{\big(X_1-\vw_1^\T\Phi({X}_2)\big)^2 , \right. \\  \left. \big(X_2-\vw_2^\T\Phi({X}_1)\big)^2\bigg\}  \right].
\end{multline}
For computational tractability, we must truncate sequences $\vw_1$ and $\vw_2$ to have $(K+1)$ components, where $K$ can be interpreted as a measure of complexity for the approximation architecture. This reduces the dimension of the original problem to $(2K+2)$.
With samples $\{(x_1^i,x_2^i)\}_{i=1}^{M}$, Problem \eqref{Approx} is approximated by:
\begin{multline}\label{ApproxW}
    \min_{\hat\vw \in\mathbb{R}^{2K+2}} \hat{\mathcal{J}}(\hat\vw) = \frac{1}{M} \sum_{i=1}^{M} \min \Big\{\big(x_1^i-\hat{\vw}_1^\T\Phi({x}_2^i)\big)^2 , \\ \big(x_2^i-\hat{\vw}_2^\T\Phi({x}_1^i)\big)^2\Big\}.  
\end{multline}

The choice of $\phi_n$ functions and the dimension must be on a case-by-case analysis. A smart choice of $\{\phi_n\}_{n=1}^{\infty}$ would allow for a shorter truncation, which significantly reduce the computational cost. 
By solving \cref{ApproxW} instead of solving \cref{cost}, we  lose accuracy with respect to two aspects. First, after the truncation, the search is restricted to the space $\mathrm{span}\{\phi_1,\dots,\phi_{2K+1}\}$, which is finite-dimensional. Second, the resulting estimators from \eqref{ApproxW} can be tailored to the specific data set $\{(x_1^i,x_2^i)\}_{i=1}^{M}$. We consider this as a over-fitting: the resulting estimator may perform extremely well on this specific data set, but it is far from the optimum when considering the whole distribution. The former aspect of losing accuracy is inevitable since we do not have infinite computational resources. The later aspect, however, has some patterns. Through our numerical experiments, we observed that, to avoid over-fitting, it is necessary to require resulting estimators to preserve the distribution type. That is, if $X_1$ and $X_2$ are continuous random variables, then $\eta_2(X_1)$ and $\eta_1(X_2)$ should have continuous distributions. We have the following assumption. 
\begin{assumption}\label{ASUMP}
    Let $(X_1,X_2)$ be a continuous random vector supported on $\mathbb{R}^2$ with $\mathrm{Cor}\{X_1,X_2\} \neq 0$. The conditional expectations, $\mathbf{E} \big[X_1 \mid X_2, U=2 \big]$ and $\mathbf{E} \big[X_2 \mid X_1, U=1 \big]$, when regarded as random variables $\eta_1^\star(X_2)$ and $\eta_2^\star(X_1)$, have continuous distributions supported on $\mathbb{R}$. That is, the probability measure on $\mathbb{R}$ induced by $\eta_1^\star(X_2)$ and $\eta_2^\star(X_1)$ should be absolutely continuous with respect to the Lebesgue measure \cite{Cinlar:2011}.
\end{assumption}
The reason for \cref{ASUMP} is very natural: as $\eta_1^\star(X_2)$ and $\eta_2^\star(X_1)$ are estimations for $X_1$ and $X_2$, we should require this comparison happens in the same family of distributions. Unless $X_1$ and $X_2$ are independent, then $X_1$ carries some information about $X_2$ and vice versa. The information can be incomplete, but when it comes to the conditional expectation and change the distribution correspondingly, we assume that the resulting distribution remains continuous.\\





The following theorem provides a basic qualification for functions $\{\phi_n\}_{n=1}^{\infty}$ based on the previous observation.
\begin{theorem}\label{bq}
    Let $(X_1,X_2)$ be a random vector supported on $\mathbb{R}^2$ with continuous distribution. Denote Lebesgue measure on $\mathbb{R}$ by $\lambda(\cdot)$. The following two statements cannot be simultaneously true: \par
       $\text{1.} \hspace{20pt} \lambda \Big(\bigcap_{n=1}^{\infty}\{x \in \mathbb{R} \mid \phi'_n (x) = 0 \}\Big) > 0 $\par
       \vspace{5pt}
       \text{2.} \hspace{15pt} There exist unique sequences $\{w_1^n\}_{n=1}^{\infty}$ and $\{w_2^n\}_{n=1}^{\infty}$ such that \par
       \vspace{5pt}
          \hspace{25pt} $\eta_1^\star(x) =  \lim\limits_{K \to \infty}\sum_{n=1}^{K+1} w_1^n \phi_n(x)  \hspace{10pt} \text{and} \hspace{10pt} \eta_2^\star(x) = \lim\limits_{K \to \infty} \sum_{n=1}^{K+1} w_2^n \phi_n(x).$
       
\end{theorem}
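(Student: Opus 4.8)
The plan is to argue \emph{by contradiction}: suppose statements 1 and 2 both hold, together with \cref{ASUMP}, and derive an inconsistency. Set $A \Equaldef \bigcap_{n=1}^{\infty}\{x \in \mathbb{R} \mid \phi_n'(x) = 0\}$, so that statement 1 reads $\lambda(A) > 0$. I would first exploit the convergence asserted in statement 2 being in the $\mathcal{C}^1$ norm: since the partial sums converge to $\eta_1^\star$ together with their derivatives (uniform convergence of derivatives is exactly what the $\mathcal{C}^1$ topology provides), I may pass the derivative through the limit to obtain $(\eta_1^\star)'(x) = \lim_{K\to\infty}\sum_{n=1}^{K+1} w_1^n \phi_n'(x)$. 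For every $x \in A$ each summand vanishes, hence $(\eta_1^\star)'(x) = 0$ on $A$; the same reasoning gives $(\eta_2^\star)'(x) = 0$ on $A$, although only one estimator is needed below.

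The central step is a one-dimensional Sard-type lemma: if $g \in \mathcal{C}^1$ and $A \subseteq \{x \mid g'(x) = 0\}$, then the image $g(A)$ has Lebesgue measure zero. I would prove this directly rather than invoke Sard's theorem, since the elementary argument is short and self-contained. Restricting to a bounded window $[-N,N]$, uniform continuity of $g'$ lets me cover $A \cap [-N,N]$ by an open set $U$ on which $|g'| < \e$; writing $U$ as a disjoint union of open intervals, $g$ is $\e$-Lipschitz on each, so the image of each interval has length at most $\e$ times its own, and summing gives $\lambda\big(g(A \cap [-N,N])\big) \le \e\, \lambda(U) \le 2N\e$. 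Letting $\e \to 0$ and taking a union over $N \in \N$ yields $\lambda\big(\eta_1^\star(A)\big) = 0$.

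Finally I would transport this conclusion back to the law of the estimator. Because $(X_1,X_2)$ has a density that is positive on $\mathbb{R}^2$, the marginal density $f_{X_2}$ is positive, so $\lambda(A) > 0$ forces $\mathbf{P}(X_2 \in A) = \int_A f_{X_2}\,\mathrm{d}\lambda > 0$. Since $X_2 \in A$ implies $\eta_1^\star(X_2) \in \eta_1^\star(A)$, we obtain $\mathbf{P}\big(\eta_1^\star(X_2) \in \eta_1^\star(A)\big) \ge \mathbf{P}(X_2 \in A) > 0$, i.e. the distribution of $\eta_1^\star(X_2)$ places positive mass on the Lebesgue-null set $\eta_1^\star(A)$. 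Hence $\eta_1^\star(X_2)$ is not absolutely continuous with respect to $\lambda$, contradicting \cref{ASUMP}. This contradiction shows statements 1 and 2 cannot hold simultaneously.

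I expect the main obstacle to be the very first step, namely justifying that the termwise derivative passes to the limit. This rests entirely on reading the convergence in statement 2 as convergence in the $\mathcal{C}^1$ norm, i.e. uniform convergence of both the functions \emph{and} their derivatives. If only sup-norm convergence of the functions were available, then $(\eta_1^\star)'(x) = 0$ on $A$ would fail to follow whenever $A$ has empty interior (for instance a fat Cantor set), and the whole argument would collapse. I would therefore state the $\mathcal{C}^1$-convergence hypothesis explicitly and note that the Schauder-basis framing of the preceding paragraphs is to be understood in $\mathcal{C}^1[a,b]$ rather than in the weaker uniform topology.
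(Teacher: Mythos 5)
Your proof is correct, and it reaches the contradiction by a genuinely different mechanism than the paper's. Both arguments pit statement 1 against the absolute continuity of the law of $\eta_1^\star(X_2)$ guaranteed by \cref{ASUMP}, but they exploit the set $A \Equaldef \bigcap_{n}\{\phi_n'=0\}$ differently. The paper observes that each $\phi_n$ is constant on $A$, hence the limit estimator equals a single constant $\bar c$ there, so its pushforward law acquires an atom at $\bar c$ of mass at least $\mathbf{P}(X_2\in A)>0$ --- an immediate clash with absolute continuity. You instead differentiate the series term by term (using $\mathcal{C}^1$ convergence) to get $(\eta_1^\star)'=0$ on $A$, and then invoke a one-dimensional Sard-type covering lemma to conclude $\lambda\big(\eta_1^\star(A)\big)=0$ while $\mathbf{P}\big(\eta_1^\star(X_2)\in\eta_1^\star(A)\big)\geq\mathbf{P}(X_2\in A)>0$. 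Your route costs you the explicit hypothesis that convergence holds in the $\mathcal{C}^1$ norm --- which you rightly flag, and which the paper does assert when introducing the Schauder expansion --- but it buys real robustness: the paper's step ``$\phi_n$ is constant on $A$, hence $\bar\eta_1=\bar c$ on $A$'' is only valid when $A$ is connected or contains intervals. For a totally disconnected $A$ of positive measure (a fat Cantor set), a $\mathcal{C}^1$ function can have vanishing derivative on all of $A$ yet be strictly increasing (integrate the distance to $A$), so it is not constant on $A$ and no atom need appear; your Sard-type lemma covers exactly that case, making your argument the stronger of the two. The covering proof of the lemma is sound (the image is contained in a set of measure at most $\e\,\lambda(U)$ for every $\e$, hence in a null set), and you correctly do not need the uniqueness of the weight sequences, which the paper invokes but never really uses.
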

\begin{proof}
    We take random variables $X_1$ and $\eta_1(X_2)$ as an example. The other pair would be similar thus it is omitted here. The optimal estimator $\eta_1^*(\cdot)$ is the conditional expectation $\mathbf{E} \big[X_1 \mid X_2 ,\mathbf{U}=2 \big]$, it induces a probability measure $\mu'$ for sets of $x_2$: $\mu' \Equaldef \mu_{X_1}(\eta_1^{*-1}(\cdot))$, here $\mu_{X_1}$ denotes the probability measure induced by $X_1$ and $\eta_1^{*-1}(\cdot)$ is the pre-image of a set through $\eta_1^*$. Note that by our assumption \ref{ASUMP},  $\mu'$ is a measure induced by a continuous random variable supported on $\mathbb{R}$, by Radon-Nikodym theorem, $(\exists) \hspace{5pt} f,g:\mathbb{R} \rightarrow \mathbb{R}$ positive almost everywhere, such that $(\forall)$ measurable set $A$,
\begin{equation}
       \mu_{X_1}(A) = \int_{A} f \mathrm{d}\lambda \hspace{5pt}\text{and} \hspace{5pt}      \mu'(A) = \int_{A} g \mathrm{d}\lambda
\end{equation}
such $f$ and $g$ are probability density functions. As they are positive almost everywhere on $\mathbb{R}$, functions $\frac{1}{f}$ and $\frac{1}{g}$ are positive and exist almost everywhere on $\mathbb{R}$. We can write:
\begin{equation}
       \lambda(A) = \int_{A} \frac{1}{f} \mathrm{d}\mu_{X_1} \hspace{5pt}\text{and} \hspace{5pt}      \lambda(A) = \int_{A} \frac{1}{g} \mathrm{d}\mu'
\end{equation}
This equation indicates equivalence between measures $\mu_{X_1}$, $\mu'$ and $\lambda$.\\
Now consider $\bar\eta_1 = \sum_{n=1}^{\infty} w_1^n \phi_n$ as the optimizer of problem \ref{ApproxW} when $K \rightarrow \infty$. Similarly we define $\mu'' \Equaldef \mu_{X_1}(\bar\eta_1^{-1}(\cdot))$. $\mu''$ is the probability measure on $\mathbb{R}$ induced by random variable $\bar\eta_1(X_2)$. On the set $\cap_{n=1}^{\infty}\{x \in \mathbb{R} \mid \phi'_n (x) = 0 \}$, $\phi_n$ can only take a constant value, and the weight parameter sequence $\{w_1^n\}_{n=1}^{\infty}$is unique. This implies that $\bar\eta_1 = \bar c$ on $\cap_{n=1}^{\infty}\{x \in \mathbb{R} \mid \phi'_n (x) = 0 \}$, where $\bar c$ is a constant or $\pm\infty$.\\
On the one hand, $\mu''(\{\bar c\}) \geq \mu_{X_1}(\cap_{n=1}^{\infty}\{x \in \mathbb{R} \mid \phi'_n (x) = 0 \})$ by its definition, and we can compute the right hand side by
\begin{equation}\label{IntM}
    \mu_{X_1}(\cap_{n=1}^{\infty}\{x \in \mathbb{R} \mid \phi'_n (x) = 0 \}) = \int_{\cap_{n=1}^{\infty}\{x \in \mathbb{R} \mid \phi'_n (x) = 0 \}} f \mathrm{d}\lambda 
\end{equation}
Suppose $ \lambda (\cap_{n=1}^{\infty}\{x \in \mathbb{R} \mid \phi'_n (x) = 0 \}) > 0$. The integration in \eqref{IntM} is strictly positive since we are integrating a positive a.e. function with respect to Lebesgue measure on a set of positive measure. This shows that $\mu''(\{\bar c\}) > 0$ if $ \lambda (\cap_{n=1}^{\infty}\{x \in \mathbb{R} \mid \phi'_n (x) = 0 \}) > 0$.\\
On the other hand, we have $\lambda(\{\bar c\}) = 0$. Measure $\mu''$ disagrees with $\lambda$ at least on one $\lambda$-null set, thus they can not be equivalent. But $\mu'$ is equivalent with $\lambda$. By looking at the definition of $\mu'$ and $\mu''$, the only explanation is: $\eta_1^*(\cdot)$ and $\bar\eta_1(\cdot)$ are not the same function. The approximation fails in this case even when $K \rightarrow \infty$.
\end{proof}

\begin{corollary}
    Theorem \ref{bq} suggests that simple, wavelets and Rectified Linear Unit (ReLU) functions are not good choices for $\{\phi_n\}_{n=1}^{\infty}$.
\end{corollary}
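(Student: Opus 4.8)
The plan is to derive the corollary directly from Theorem~\ref{bq}: for each of the three families I would verify statement~1 of the theorem, namely that the common critical set $S \Equaldef \bigcap_{n=1}^{\infty}\{x\in\mathbb{R}\mid \phi_n'(x)=0\}$ has positive Lebesgue measure. Once statement~1 is established, the theorem rules out statement~2, so there is no unique convergent expansion $\eta_1^\star=\lim_{K}\sum_{n=1}^{K+1} w_1^n\phi_n$ (and symmetrically for $\eta_2^\star$). This is precisely what ``not a good choice'' should mean here: the candidate estimator is forced to be constant on a set of positive measure, so the induced law $\mu''$ carries an atom exactly where the optimal, atomless law $\mu'$ of Assumption~\ref{ASUMP} carries none. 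Thus the whole corollary reduces to the single computation $\lambda(S)>0$ per family.

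First I would treat simple (step) functions, i.e.\ finite linear combinations of indicators of measurable sets. Such functions are piecewise constant, so the classical derivative exists and vanishes off a set of measure zero; hence $\{x\mid \phi_n'(x)=0\}$ is co-null for every $n$, and a countable intersection of co-null sets is co-null, giving $\lambda(S)=\infty>0$. Statement~1 holds and the corollary follows for this family. The Haar wavelet is dispatched identically, since its scalings and translations are again piecewise constant with a.e.-vanishing derivative.

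Next I would handle the rectifier family $\phi_n(x)=(x-a_n)_+$, for which $\phi_n'(x)=0$ exactly on the inactive half-line $(-\infty,a_n)$, the kink at $a_n$ being a single point. For the canonical construction with thresholds bounded below, $a_\star\Equaldef \inf_n a_n>-\infty$, the common flat region obeys $S\supseteq(-\infty,a_\star)$, again of infinite measure, so statement~1 holds; the same reasoning exhibits a flat tail for any one-sided ReLU dictionary.

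The main obstacle I anticipate is that the conclusion is sharpest for piecewise-constant and one-sided constructions. For a smooth, compactly supported wavelet, or for a ReLU dictionary whose thresholds are dense and unbounded on both sides, the union of the ``active'' regions can cover $\mathbb{R}$, so the \emph{infinite} intersection $S$ may be $\lambda$-null and Theorem~\ref{bq} does not apply verbatim. To make the corollary airtight I would either (i) restrict the statement to the standard realizations of these families, where a.e.-flatness or bounded support yields a genuine common flat set of positive measure, or (ii) argue at each finite truncation, where the combination is constant outside a bounded set and the resulting flat tail of positive measure already injects an atom into $\mu''$ that the atomless $\mu'$ cannot match. Reconciling the ``$K\to\infty$'' phrasing of Theorem~\ref{bq} with families that are flat only at each finite stage is the delicate point of the argument.
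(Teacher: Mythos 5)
Your approach is exactly the one the paper intends: the corollary is stated without an explicit proof, and the accompanying Remark makes clear that the authors' justification is precisely that these families make statement~1 of Theorem~\ref{bq} hold, i.e.\ $\lambda\bigl(\bigcap_n\{x\mid\phi_n'(x)=0\}\bigr)>0$, which then forces $\mu''$ to place an atom where the atomless $\mu'$ of Assumption~\ref{ASUMP} cannot. Your verification for simple functions and for the Haar wavelet (piecewise constant, hence derivative zero off a null set, and a countable intersection of co-null sets is co-null) is correct, as is the one-sided ReLU case with thresholds bounded below. You also go beyond the paper in a useful way: you correctly observe that for a ReLU dictionary whose kinks are dense or unbounded on both sides, and for smooth compactly supported wavelets, the \emph{infinite} intersection can be $\lambda$-null, so Theorem~\ref{bq} does not apply verbatim and the corollary as stated is really only airtight for the standard realizations or at each finite truncation $K$. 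The paper glosses over this entirely (it simply asserts that ReLU architectures ``violate the condition''), so your caveat (i)/(ii) is a genuine refinement rather than a gap in your own argument; the finite-truncation version (ii) is probably the cleanest way to reconcile the corollary with the $K\to\infty$ phrasing of the theorem.
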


\begin{remark}
    Theorem \ref{bq} is a necessary condition for the choice of $\phi_n$, whose purpose is to avoid over-fitting. If we  approximate the estimator without taking this condition into consideration, we would obtain an oscillating function that tries to fit every sample point in the data set. 
    Such estimators fail to pass  validation tests: If we generate another set of data from the same distribution and apply our previous estimator, the performance is degraded with high probability.
    Neural networks are popular architectures for approximation of unknown functions and one of the most  commonly used activation functions is the ReLU. Unfortunately, using an architecture based on ReLUs  
   violates the condition proposed in Theorem \ref{bq}. This means the approximation generated by a ReLU network would lead to over-fitting. Thus, we must resort to other activation functions when applying a Neural Network on Problem \eqref{ApproxW}.
\end{remark}

\section{Numerical Results}

As pointed out in the previous section, we should avoid to use a set of functions such that their derivatives being zero  on a set of positive Lebesgue measure. We tried different types of functions, it turns out that the family of softplus functions and polynomials return extrodinary and stable results.  
\subsection{Approximation Using Soft-Plus Functions }
With a slightly abuse of terminology, we call the function \begin{equation}
\phi_{\alpha,\beta}(x) \Equaldef \frac{1}{\alpha} \log\Big(1+\exp\big(1 \pm \alpha (x-\beta)\big)\Big)
\end{equation}
as a soft-plus function since it is a smooth approximation to the ReLU function without violating the principle proposed in Theorem \ref{bq}. Although the linear combination of $\phi_{\alpha,\beta}(x)$ can not cover $\mathcal{C}^1$ for arbitrary countable collection of $(\alpha,\beta)$, it gives us stable and promising results, especially when $(X_1,X_2)$ is a bivariate Gaussian vector. As shown in \cite{4403040}, the density function of the minimum of two correlated Gaussian random variables has an explicit analytical form, which is related to the Gaussian error function $\mathrm{erf}(\cdot)$. 
\\

\begin{figure}[ht]
    \centering
\includegraphics[width=\columnwidth]{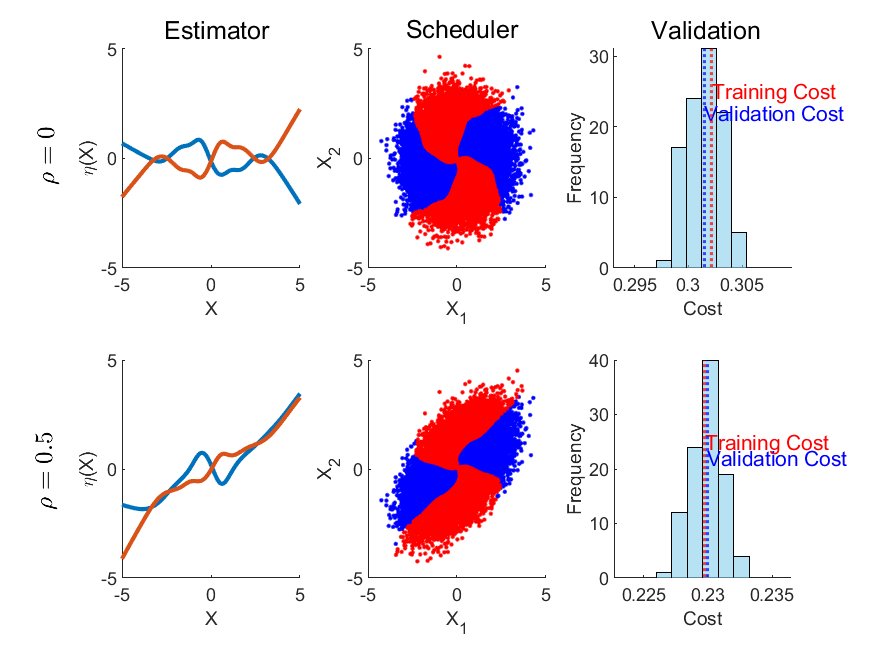}
    \caption{SoftPlus estimator, along with its corresponding scheduling policy, and a validation analysis for a symmetric Gaussian distribution across different correlation values.}
    \label{fig:softplus_results}
\end{figure}

\subsection{Approximation Using Polynomials}

The Weierstrass Approximation Theorem states that on a closed interval, we can approximate a continuous function with vanishing error by increasing the dimension of approximation. The convergence is even uniform for every element on the designated interval. This gives us a theoretical guarantee of convergence if we choose $\phi_n = x^n$. It is also worth mentioning that the polynomials $\{x^n\}_{n=1}^{\infty}$ never violate the condition  $\lambda (\cap_{n=1}^{\infty}\{x \in \mathbb{R} \mid \phi'_n (x) = 0 \}) = 0$. This makes polynomials a great choice. However, in practice, we find out that it requires a relatively large amount of polynomials to get a decent performance on the resulting estimator pair. The idea of polynomial approximation definitely worth more investigation, since it may leads to convergence speed and sample complexity results in the future.\\

\begin{figure}[ht]
    \centering
\includegraphics[width=\columnwidth]{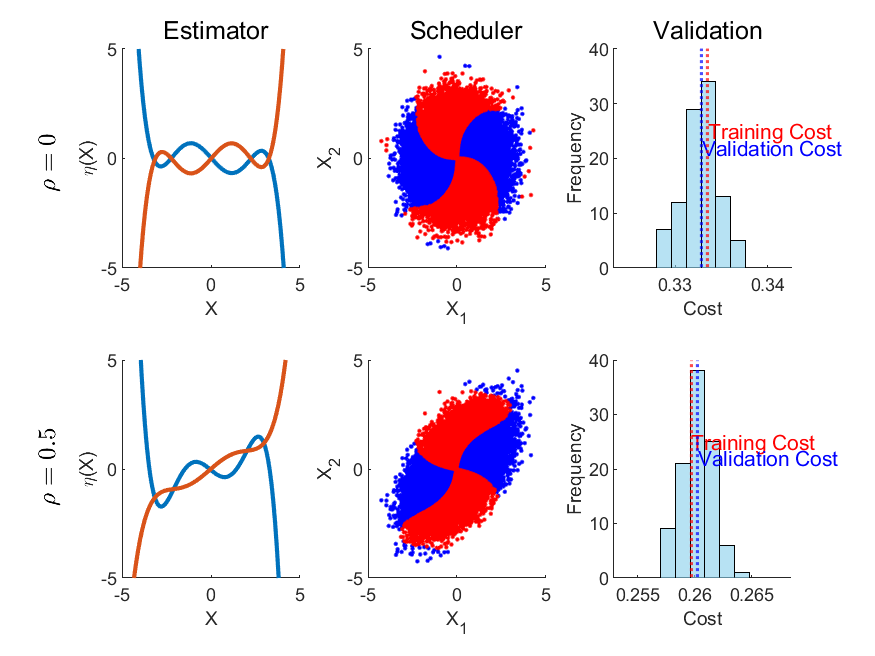}
    \caption{Polynomial estimator, along with its corresponding scheduling policy, and a validation analysis for a symmetric Gaussian distribution across different correlation values.}
    \label{fig:poly_results}
\end{figure}

\begin{table}[ht]
\centering
\caption{Validation cost of four different estimators for the symmetric Gaussian with different correlation values.}
\label{tab:simulations_results}
\begin{tabular}{ccccc}
\hline \hline
$\rho$ & MMSE & Linear & SoftPlus & Polynomial \\ \hline \hline
$0$    & 0.363 & 0.363 & 0.303 & 0.334 \\
$0.25$ & 0.360 & 0.345 & 0.285 & 0.315 \\
$0.5$  & 0.337 & 0.288 & 0.230 & 0.261 \\
$0.75$ & 0.254 & 0.173 & 0.140 & 0.159 \\ \hline \hline
\end{tabular}
\end{table}

\section{Conclusion and Future Work}



We have considered the problem of remote estimation of a bivariate Gaussian source
when only one of the variables is revealed by a scheduler to the estimator. The optimal scheduler for this seemingly simple problem is unknown. Therefore, we propose to look for good solution using a single-layer neural network, called neuro-schedulers. We establish a  condition to determining which activation functions are cannot be used if we are trying to avoid data over-fitting. We also show that the the optimal scheduler is a nonlinear in general. Future work consists of using deep neural-networks and obtaining performance guarantees for specific architectures as well as optimization algorithms that exploit the inherent structure of the scheduling problem.











\bibliography{./reference/ref}
\bibliographystyle{ieeetr}

\end{document}